\newtheorem{theorem}{Theorem}
\newtheorem{definition}{Definition}
\newtheorem{assumption}{Assumption}
\newtheorem{remark}{Remark}
\newtheorem{lemma}{Lemma}
\DeclareMathOperator*{\argmin}{arg\,min}
\def\BibTeX{{\rm B\kern-.05em{\sc i\kern-.025em b}\kern-.08em
    T\kern-.1667em\lower.7ex\hbox{E}\kern-.125emX}}
\begin{document}

\title{\LARGE \bf Robust Safe Control Synthesis with Disturbance Observer-Based Control Barrier Functions}

\author{Ersin Daş$^{1}$ and Richard M. Murray$^{1}$
\thanks{$^{1}$Ersin Daş and Richard M. Murray are with the California Institute of Technology, 1200 East California Boulevard, Pasadena, CA 91125 USA
        {\tt\small ersindas@caltech.edu; murray@cds.caltech.edu}}
}




\maketitle
\thispagestyle{empty}
\pagestyle{empty}

\begin{abstract}
In a complex real-time operating environment, external disturbances and uncertainties adversely affect the safety, stability, and performance of dynamical systems.
This paper presents a robust stabilizing safety-critical controller synthesis framework with control Lyapunov functions (CLFs) and control barrier functions (CBFs) in the presence of disturbance. A high-gain input observer method is adapted to estimate the time-varying unmodelled dynamics of the CBF with an error bound using the first-order time derivative of the CBF. This approach leads to an easily tunable low-order disturbance estimator structure with a design parameter as it utilizes only the CBF constraint. The estimated unknown input and associated error bound are used to ensure robust safety by formulating a CLF-CBF quadratic program. The proposed method is applicable to both relative degree one and higher relative degree CBF constraints. The efficacy of the proposed approach is demonstrated using a numerical simulations of an adaptive cruise control system and a Segway platform with an external disturbance.
\end{abstract}


\section{Introduction}
\label{sec:introduction}
Real-time safety is a necessity in many control applications, for instance, autonomous vehicles, constrained robotic systems, and spacecraft.
Therefore, provable safety-critical control of dynamical systems has drawn increasing attention in recent years.
Control barrier functions (CBFs) are a tool to handle safety constraints in the form of forward invariance of a set \cite{ames2019control}.
CBFs can be unified with stability and performance requirements, encoded by the time derivative of a control Lyapunov function (CLF), in an online quadratic program (CLF-CBF-QP) to ensure safety and control objectives simultaneously \cite{ames2014control}.
This optimization framework has been widely applied to multi-agent systems \cite{xu2017realizing}, autonomous driving \cite{he2021rule}, and wheeled robots \cite{gurriet2018towards} due to its computational efficiency. 
Although these applications guarantee optimization constraints for high-fidelity dynamical models, real-time control systems generally include unavoidable uncertainties and disturbances that might cause performance degradation, and in some cases, even lead to unsafe operations \cite{xu2015robustness}. 

To address the robustness mentioned above issue of CLF-CBF-QP, the input-to-state safe CBF (ISSf-CBF) technique, which provides a robust stabilizing safe controller via a larger forward invariant set, has been introduced in \cite{kolathaya2018input}. The infinity norm of the bounded disturbance is used to synthesize the controller directly without a model of the unknown input.
More recently, tunable ISSf-CBF (TISSf-CBF) has been proposed to reduce the conservatism of the ISSf-CBF method due to the worst-case disturbance assumption \cite{alan2021safe}. In \cite{jankovic2018robust, buch2021robust}, and \cite{garg2021robust} robust CBF approaches have been investigated for nonlinear systems with disturbance to guarantee safety.

Disturbance observer theory, a well-studied robust control tool, has been used with the ISSf-CBF method and worst-case disturbance bound to attenuate external disturbances for safety-critical control of an autonomous surface vehicle \cite{gu2021safety}. 
In \cite{zhao2020adaptive}, robust safety constraints have been enforced by an adaptive pointwise unmodeled dynamic estimation law. 
This work considers relative-degree one systems in which the first time derivative of the CBF depends on the control signal. However, this restrictive assumption is violated in several robotic systems, as most safety constraints have relative degree greater than one.

In this study, a high-gain disturbance observer-based robust CLF-CBF-QP is formulated to guarantee the safety of a disturbed nonlinear system in the presence of time-varying unknown inputs.
This disturbance observer scheme integrates the first-order time derivative of CBF and CLF with an input observer approach to estimate the unmodelled system dynamics within an exponential error bound. 
Since this method uses only the first-order CBF or CLF constraint, it presents a simple disturbance estimation framework with only one design parameter that needs to be tuned.
We then formulate a CLF-CBF-QP containing estimated disturbance and error bound-based constraints that provide a robust, safe stabilizing control input.
Moreover, the proposed safe control method is appropriate for high relative degree CBF constraints.
Finally, we demonstrate the applicability of this method using adaptive cruise control (ACC) and Segway platform simulation examples. 

The rest of this paper is organized as follows.
The preliminaries are introduced in Section II. Section III provides the disturbance observer-based robust CLF-CBF-QP scheme. Simulation results are presented in Section IV, Section V concludes the paper.

\section{Preliminaries}
Notation: The notation used in this study is fairly standard.
$\mathbb{R}$, $\mathbb{R}^+$, $\mathbb{R}^+_0$ represent the set of real, positive real and non-negative real numbers, respectively.
The Euclidean norm of a matrix is denoted by $\|\cdot\|$, and $\|\cdot\|_\infty$ represents the infinity norm.
A continuous function $\alpha : \mathbb{R}^+_0 \rightarrow \mathbb{R}^+_0$ belongs to class-$\mathcal{K}_\infty$ ($\alpha \in \mathcal{K}_\infty$) if it is strictly increasing, $\alpha(0) = 0$, $\alpha(r) \rightarrow \infty$ as $r \rightarrow \infty$, and a continuous function $\alpha : \mathbb{R} \rightarrow \mathbb{R}$ belongs to extended class-$\mathcal{K}_\infty$ ($\alpha \in \mathcal{K}_{\infty, e}$) if it is strictly monotonically increasing, $\alpha(0) = 0$, $\alpha(r) \rightarrow \infty$ as $r \rightarrow \infty$, $\alpha(r) \rightarrow -\infty$ as $r \rightarrow -\infty$. 
For a given set $\mathcal{C}  \subset \mathbb{R}^n$, $\partial \mathcal{C}$ and Int$(\mathcal{C})$ denote its boundary and interior, respectively.

We consider a nominal nonlinear control affine and disturbed nonlinear control affine systems given by 
\begin{align}
\label{system}
    \dot{x} = f(x) + g(x) u, \\
\label{sysdist}
    \dot{x} = f(x) + g(x) u + g(x) d(t, x),
\end{align}
where $x \in X \subset \mathbb{R}^n$, $u \in U \subset \mathbb{R}^m$ is the admissible control input, $d: \mathbb{R}^+_0 \times X \rightarrow D \subset \mathbb{R}^m$ is the time varying, essentially bounded disturbance within a set $D$, and $f: X \rightarrow \mathbb{R}^n$, $g: X \rightarrow \mathbb{R}^{n \times m}$ are locally Lipschitz. We consider the disturbed systems with a matched disturbance for notation simplicity and compatibility with the ISSf-CBF; however, our method can also be adapted to the unmatched disturbance input with a slight modification (see Segway platform example).
\begin{assumption}
\label{asnew}
There exists constants $L_t,L_x \in \mathbb{R}^+$ such that $ \| d(t, x)- d( \tau, y) \| \leq L_t \| t - \tau \| + L_x \| x -y \|~\forall x, y \in X;~ t, \tau \in \mathbb{R}^+$.
\end{assumption} 
Assumption~\ref{asnew} implies that $d(t,x)$ is locally Lipschitz continuous. This is a common assumption in robust control literature~\cite{zhao2020adaptive}.

\subsection{Stability and Control Lyapunov Functions}
CLFs allow the formulation of optimization-based stabilizing controllers, and exponential stability requirements can be reduced to finding a CLF for system (\ref{system}). 
Therefore, CLFs are useful to represent closed-loop control objectives in a CLF-CBF-QP, for instance, reaching a target set \cite{ames2019control, artstein1983stabilization}. 
\begin{definition}[Exponential stability]
The equilibrium point, $x = 0$, of the nonlinear system (\ref{system}) is \textit{exponentially stable} if there are constants $\beta_1, \beta_2, \beta_3 \in \mathbb{R}^+$ such that ${\|x(0)\| \leq \beta_1} \implies \|x(t)\| \leq \beta_2 e^{-\beta_3 t} \|x(0)\|~~ \forall t \geq 0$.
\end{definition}
\begin{definition}[Control Lyapunov function \cite{kolathaya2018CLF, molnar2021model}]
For the disturbed system (\ref{sysdist}), a continuously differentiable function $V : X \rightarrow \mathbb{R}^+_0$ is an \textit{exponentially stabilizing control Lyapunov function}, if there exists constants $\zeta_1, \zeta_2, \lambda \in \mathbb{R}^+$ such that $\forall x \in X:$ $\zeta_1 \|x\|^2 \leq V(x) \leq \zeta_2 \|x\|^2$,
\begin{align}
\nonumber
     \inf_{u \in U} \sup_{d \in D} \Big ( {\dot{V}(x, u, d)}  \triangleq 
      \underbrace{ \dfrac{\partial V}{\partial x} f(x) }_{L_fV(x)} + \underbrace { \dfrac{\partial V}{\partial x}g(x) }_{L_gV(x)} (u + d) & \Big ) \\
      \label{CLF}
      \leq -\lambda V(x),
\end{align}
where $L_f V(x) : X \rightarrow \mathbb{R}$, $L_g V(x) : X \rightarrow \mathbb{R}^m$ are the Lie derivatives of $V(x)$ with respect to $f(x)$, $g(x)$, respectively.
\end{definition}

In real-time applications, external disturbances such as external load and friction may deteriorate the stability or safety of dynamical systems.
In such cases, the definition of a CLF in~(\ref{CLF}) can be extended to input-to-state stabilizing CLF (ISS-CLF) according to the disturbance input $d(t)$ in~(\ref{sysdist}).
The ISS-CLF formula is generally defined for state-independent matched disturbance \cite{kolathaya2018CLF}. Therefore we consider the state-independent essentially bounded disturbance $d(t)$ for the ISS-CLF definition. 
\begin{definition}[Input-to-state stabilizing CLF \cite{kolathaya2018CLF}] For the disturbed system (\ref{sysdist}), a continuously differentiable function $V : X \rightarrow \mathbb{R}^+_0$ is an \textit{exponential input-to-state stabilizing control Lyapunov function (ISS-CLF)} with respect to the essentially bounded disturbance $d(t)$, {$\sup_{t} {\|d(t)\|} < \infty$}, if there exists $\zeta_1, \zeta_2, \lambda \in \mathbb{R}^+,~\iota \in \mathcal{K}_\infty$, such that $\forall x \in X:$ $\zeta_1 \|x\|^2 \leq V(x) \leq \zeta_2 \|x\|^2$,
\begin{equation}
\label{ISSCLF}
    \inf_{u \in U} { L_f V(x)  +  L_g V(x) (u + d)  \leq -\lambda V(x) + \iota (\|d\|_\infty)}.
\end{equation}
\end{definition}

Given a $V(x)$ and $\lambda \in \mathbb{R}^+$ for (\ref{sysdist}), we define the set of exponentially stabilizing controllers as
\begin{equation}
\label{CLFkx}
 K_\text{CLF}(x, d) \triangleq \left\{ u \in U  \big | \dot{V}(x, u, d) \leq -\lambda V(x)~ \forall d \in D  \right\},
\end{equation}
which states that robust exponential stability can be achieved by synthesizing a control input that applies the CLF condition~(\ref{CLF}) to the disturbed system~(\ref{sysdist}).

\subsection{Safety and Control Barrier Functions}
Control barrier functions are a useful tool for rendering the set $\mathcal{C}  \subset \mathbb{R}^n$ as forward invariant throughout its state-space.  
We note that set $\mathcal{C}$ is forward invariant if, for every initial condition $x(0) \in \mathcal{C}$, the solution of (\ref{system}) satisfies $x(t) \in \mathcal{C}$ $\forall t \geq 0$. We consider a set $\mathcal{C} \subset X \subset \mathbb{R}^n$ defined as a 0-superlevel set of a continuously differentiable function $h(x): X \rightarrow \mathbb{R}$ such that
\begin{align}
\label{CBF1}
    \mathcal{C} \triangleq \left\{ x \in X \subset \mathbb{R}^n : h(x) \geq 0 \right\}, \\
    \label{CBF12}
    \partial \mathcal{C} \triangleq \left\{ {x \in X \subset \mathbb{R}^n} : h(x) = 0 \right\}, \\
    \label{CBF13}
    \text{Int}(\mathcal{C}) \triangleq \left\{ {x \in X \subset \mathbb{R}^n} : h(x) > 0 \right\}.
\end{align}
The nominal closed-loop system (\ref{system}) is safe on the set $\mathcal{C}$ if $\mathcal{C}$ is forward invariant \cite{ames2019control}. 
CBFs can be utilized to design safe controllers for system (\ref{system}) with respect to set $\mathcal{C}$. 
\begin{definition}[Control barrier function \cite{ames2019control}]
Let $\mathcal{C} \subset X $ be the 0-superlevel set of a continuously differentiable function $h : X \rightarrow \mathbb{R}$  defined by (\ref{CBF1})-(\ref{CBF13}). 
Then, $h(x)$ is a CBF for system (\ref{sysdist}) on $\mathcal{C}$ if there exists $\alpha \in \mathcal{K}_{\infty, e}$ such that $\forall x \in \mathcal{C}$:
\begin{align}
\nonumber
   \sup_{u \in U} \inf_{d \in D} \Big ( {\dot{h}(x, u, d)}  \triangleq 
   \underbrace{ \dfrac{\partial h}{\partial x} f(x) }_{L_f h(x)} + \underbrace { \dfrac{\partial h}{\partial x}g(x) }_{L_g h(x)} (u+d) &  \Big ) \\
\label{CBF}
   \geq -\alpha (h(x)).
\end{align}
\end{definition}

Given an $h(x)$, $\alpha \in \mathcal{K}_{\infty, e}$ for system (\ref{sysdist}), we define the set of robust safe controllers as
\begin{equation}
\label{CBFkx} 
    K_\text{CBF}(x, d) \triangleq \left\{ u \in U  \big |  \dot{h}(x, u, d) \geq - \alpha (h(x)) ~ \forall d \in D \right\}.
\end{equation}

\begin{definition}[Exponential CBF (ECBF) \cite{nguyen2016exponential}]
\label{D1}
Let $\mathcal{C} \subset X$ be the 0-superlevel set of an $r$-times continuously differentiable function $h : X \rightarrow \mathbb{R}$ such that $L_g L^{r-1}_f h(x) \neq 0$ and $L_g L_f h(x)$ $=$ $L_g L^{2}_f h(x)$ $= \dots =$ $L_g L^{r-2}_f h(x) $ $=$ $0$ $\forall x \in \mathcal{C}$.
Then, $h(x)$ is an ECBF, which is special form of the higher order CBF \cite{xiao2019control}, for system (\ref{sysdist}) on $\mathcal{C}$ if there exists a row vector $K_{\alpha} \in \mathbb{R}^r$ such that $\forall x \in \mathcal{C}$:
\begin{align}
\nonumber
   \sup_{u \in U} \inf_{d \in D} \Big (  {h^r(x, u, d)} \triangleq 
   L^{r}_f h(x) + L_g L^{r-1}_f h(x) & (u + d)  \Big ) \\
   \label{ECBF}
   \geq -K_{\alpha} \eta_b (x),
\end{align}
where $\eta_b (x) = \big [ h(x) \ \dot{h}(x) \ \ddot{h}(x) \ \cdots \  h^{r-1} (x) \big ]^T $. 
\end{definition}
Note that $K_{\alpha}$ in the ECBF definition should satisfy certain specific properties; therefore, we refer the interested readers to \cite{nguyen2016exponential} for details.

\begin{definition}[Input-to-state safe CBF \cite{kolathaya2018input} ]
Let $\mathcal{C} \subset X$ be the 0-superlevel set of a continuously differentiable function $h : X \rightarrow \mathbb{R}$. 
Then, $h(x)$ is an ISSf-CBF for disturbed system (\ref{sysdist}) on $\mathcal{C}$ if there exists $\alpha \in \mathcal{K}_{\infty, e}$, $\iota \in \mathcal{K}_{\infty}$ such that $\forall x \in \mathcal{C}$:
\begin{equation}
\label{ISSfCBF}
     \sup_{u \in U} L_f h(x)  +  L_g h(x) u  \geq - \alpha (h(x)) - \iota (\|d\|_\infty).
\end{equation}
\end{definition}

When valid CLFs and CBFs are given for system~(\ref{system}), the safety constraints can be enforced with a relaxed CLF constraint to compute pointwise safe control inputs via following CLF-CBF-QP:   
\begin{argmini*}|s|
{u \in U, \delta \in \mathbb{R}}{\|u-k(x)\|^2 + p \delta^2}
{\label{CLF-CBF-QP}}
{u^*(x)=}
\addConstraint{\dot{h}(x, u)  \geq - \alpha (h(x)) }
\addConstraint{\dot{V}(x, u)  \leq -\lambda V(x) + \delta}
\end{argmini*}
where $k(x)$ is a locally Lipschitz continuous baseline control law, $\delta \in \mathbb{R}$ is a relaxation variable that is penalized by a constant $p \in \mathbb{R}^+$. 
In \cite{xu2015robustness}, this QP-based controller
is shown to generate Lipschitz continuous controllers.

Similarly, one can combine ISS-CLF~(\ref{ISSCLF}) and ISSf-CBF~(\ref{ISSfCBF}) constraints to synthesize pointwise safe control inputs for the disturbed system~(\ref{sysdist}) via the following~ISSf-CBF-QP~\cite{kolathaya2018input, alan2021safe}:
\begin{argmini*}|s|[3]<b>
{u \in U,~\delta \in \mathbb{R}}{\|u-k(x)\|^2 + p \delta^2}
{\label{ISSf-CBF-QP}}
{u^*(x)=}
\addConstraint{\dot{h}(x, u) }{\geq - \alpha (h(x)) + \epsilon \|L_g h(x)\|^2}
\addConstraint{\dot{V}(x, u)}{\leq -\lambda V(x) + \delta}
\end{argmini*}
where $\epsilon \in \mathbb{R}^+$ is a user-defined constant.
Safe controller synthesizing using ISSf-CBF-QP is a way to handle unmodeled system dynamics. 
However, this QP may conservatively ensure the safety requirements for system~(\ref{sysdist}) depending on the selected $\epsilon$ parameter. 

\subsection{High-Gain Input Disturbance Observer}
In order to define robust, exponentially stabilizing, safe controllers using (\ref{CLFkx}) and (\ref{CBFkx}) we need to measure the time-varying disturbance $d(t, x) \in D$ that is not directly available in real-time applications. 
Using a disturbance observer framework, our objective is to replace $d(t, x)$ with an estimated disturbance term $\hat{d}$ and the upper bound of the associated estimation error. 
Note that the dependence on time $t$ will be omitted for simplicity throughout the rest of the paper, and it will be used only if necessary. 

Specifically, we consider a high-gain input disturbance observer that is proposed in \cite{stotsky2002application}.
Let us define a first-order dynamical system
\begin{equation}
\label{DOB1}
    \dot{z}_d= v_d + w_d,
\end{equation}
where $z_d \in \mathbb{R}$ and $v_d \in \mathbb{R}$ are known or measured variables, and $w_d \in \mathbb{R}$ is the unknown time-varying unmodelled dynamics or disturbance input of the system that needs to be estimated.
Define estimated disturbance $\hat{w}_d \in \mathbb{R}$ as
\begin{equation}
\label{DOBe}
    \hat{w}_d = k_d z_d - \varepsilon_d,  
\end{equation}
where $k_d \in \mathbb{R}^+$ is the disturbance observer gain to be tuned, and
$\varepsilon_d \in \mathbb{R}$ is an auxiliary variable satisfying \begin{equation}
\label{DOBeps}
    \dot{\varepsilon}_d = - k_d \varepsilon_d + k_d v_d + k_d^2 z_d.
\end{equation}
Then, the error dynamics of this disturbance estimation method, $e_d = (w_d - \hat{w}_d) \in \mathbb{R}$, is obtained as
\begin{equation}
\label{DOBer}
    e_d  =  w_d + \varepsilon_d - k_d z_d.
\end{equation}
\begin{definition}[Estimation error quantified observer \cite{wang2021observer}]
A disturbance observer is called an \textit{estimation error quantified observer} for system~(\ref{DOB1}) if it generates a disturbance estimation~$\hat{w}_d$ with an error bound~$\|e_d\|$ such that~$\forall t \geq 0$,
\begin{equation}
\label{EEQ}
   \|e_d\| \leq M_d(t, w_d, \hat{w}_d), 
\end{equation}
where $M_d(t, w_d, \hat{w}_d): \mathbb{R}_0^+ \times \mathbb{R} \times \mathbb{R} \rightarrow \mathbb{R}_0^+$. 
\end{definition}
If we assume that $\dot{w}_d$ is bounded as ${\|\dot{w}_d\|} \leq b_1$,
then the high-gain disturbance observer model given in (\ref{DOB1})-(\ref{DOBer}) is an estimation error quantified disturbance observer with the following error bound \cite{stotsky2002application}:
\begin{equation}
\label{DOBE}
   \|e_d\| \leq \sqrt{e_d(0)^2 e^{-k_d t} + {b_1^2}/{k_d^2}}.
\end{equation}

\section{Disturbance~Observer-Based Safety-Critical Control}
In this section, we address the issue of having disturbances by proposing a new disturbance observer-based robust safety-critical framework.
Specifically, we adapt the high-gain disturbance observer scheme to estimate the time-varying effect of disturbance on the time derivative of CBF with the associated error bound. 
Next, we use the estimated part of the CBF constraint and error bound to construct a robust safety constraint. 

The CLF-CBF-QP for disturbed system (\ref{sysdist}) can be formulated using the linear constraints of the exponentially stabilizing and safe controller sets in (\ref{CLFkx})~and~(\ref{CBFkx}) as
\begin{argmini}|s|[3]<b>
{u \in K_\text{CBF},~\delta \in \mathbb{R}}{\|u-k(x)\|^2 + p \delta^2}
{\labelOP{rCBF0}}
{u^*(x)=}
{\labelOP{rCBF}}
\addConstraint{L_f h(x) + L_g h(x)u + L_g h(x)d \geq - \alpha (h(x))  }
\addConstraint{L_f V(x) + L_g V(x)u + L_g V(x)d  \leq -\lambda V(x) + \delta}
\end{argmini}
where $L_g h(x)d$, $ L_g V(x)d$ are unknowns since they depend on immeasurable disturbance $d$.
The objective function is set to modify desired feedback control input minimally.

If we consider $h : X \rightarrow \mathbb{R}$ to be a CBF for system (\ref{sysdist}) on set $\mathcal{C}$, the time derivative of $h(x)$ is given by
\begin{equation}
\label{doth}
  \dot{h}(x, u, d) = \underbrace{ L_f h(x) + L_g h(x)u}_{a(x, u)} + \underbrace{ L_g h(x)d }_{b(x, d)},
\end{equation}
where $a(x, u) : X \times U  \rightarrow \mathbb{R}$ is the known part of $\dot{h}(x, u, d)$, and $b(x, d) : X \times D  \rightarrow \mathbb{R}$ needs to be estimated.
The effect of the disturbance on the $\dot{h}(x, u, d)$ is obvious.
Furthermore, the first-order system dynamics of~(\ref{doth}) is in the form of an observation problem in standard format~(\ref{DOB1}).
Therefore, the high-gain input disturbance observer can be defined for estimation of $b(x, d)$, i.e., $\hat{b}(x, u, d)$, as
\begin{align}
\label{DOBb}
  &  \hat{b}(x, u, d)  = k_b {h}(x) - \varepsilon_b,  \\
\label{DOBe1}
  &  \dot{\varepsilon}_b  = - k_b \varepsilon_b + k_b a(x, u) + k_b^2 {h}(x), \\
    \label{DOBb1}
  &  e_b  = b(x, d) -  \hat{b}(x, u, d) =  b(x, d) + \varepsilon_b - k_b {h}(x),
\end{align}
where $k_b \in \mathbb{R}^+$ is the disturbance observer gain, $e_b \in \mathbb{R}$ is the estimation error dynamic, and $\varepsilon_b \in \mathbb{R}$ is the auxiliary variable.
\begin{assumption}
\label{as1}
There exists a constant $b_h \in \mathbb{R}^+_0$ such that ${\|\dot{b}(t, x, u, d)\|} \leq b_h$, where $\dot{b}(t, x, u, d)$ is given by
\begin{equation}
\label{bh}
  \dot{b}(t, x, u, d) = \dfrac{\partial b}{\partial t} + \dfrac{\partial b}{\partial x}
  (f(x) + g(x) u + g(x) d(t, x)).
\end{equation} 
\end{assumption}
\vspace{1ex}
Note that $L_g h(x)$ and ${\partial d(t, x)}/{\partial t}$ are bounded on the set $X$; therefore, ${\partial b}/{\partial t} = L_g h(x) {\partial d(t, x)}/{\partial t}$ is bounded.
Since $h(x)$ is continuously differentiable, and $d(t, x)$, $L_g h(x) $ are locally Lipschitz, $b(t, x, d) = L_g h(x) d(t, x)$ is locally Lipschitz because the sum or product of two Lipschitz functions is also Lipschitz on a bounded set \cite{xu2015robustness}.
Similarly, by locally Lipschitz properties of $f(x),~g(x),~d(t, x)$, we can show that $\dot x = f(x) + g(x) u + g(x) d(t, x)$ is locally Lipschitz on $X$ with a locally Lipschitz $u$. 
Therefore, Assumption~\ref{as1} can be satisfied by utilizing the fact that a Lipschitz function is bounded on a bounded domain.
\begin{assumption} 
\label{as2}
The CBF and CLF have relative degree one, i.e., $L_g h(x) \neq 0$, $L_g V(x) \neq 0$ $\forall x \in X$ in (\ref{rCBF}).
\end{assumption}
Note that, firstly, we consider the relative degree one systems using Assumption \ref{as2}, then we extend our results to the higher relative degree systems via an ECBF.

\begin{theorem}
\label{t1}
The estimation error dynamics of the disturbance observer given in (\ref{DOBb1}), under Assumption \ref{as1} and Assumption \ref{as2}, convergences to a set defined by 
\begin{equation}
\label{erb}
  \| e_b(t, x, u, d) \| \leq \sqrt{ ( e_b(0)^2 - {b_h^2}/{k_b^2}) e^{-k_b t} + {b_h^2}/{k_b^2}}.
\end{equation}
\end{theorem}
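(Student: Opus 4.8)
The plan is to reduce the error dynamics to a scalar stable linear system driven by the bounded term $\dot b$, and then close the argument with a Lyapunov/comparison estimate. First I would differentiate the error expression~(\ref{DOBb1}) along the trajectories of~(\ref{sysdist}), substitute the auxiliary dynamics~(\ref{DOBe1}) and $\dot h(x,u,d)=a(x,u)+b(x,d)$ from~(\ref{doth}); the terms $k_b a(x,u)$ and $k_b^2 h(x)$ cancel and I expect to obtain
\begin{equation*}
  \dot e_b = \dot b - k_b\bigl(\varepsilon_b - k_b h(x) + b(x,d)\bigr) = -k_b e_b + \dot b ,
\end{equation*}
i.e. a first-order Hurwitz linear ODE whose forcing is the time derivative of the unmodelled term, which Assumption~\ref{as1} bounds by $\|\dot b\|\le b_h$ (the boundedness being justified by the Lipschitz/continuity arguments collected just before that assumption, with $\dot b$ given by~(\ref{bh})).

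Next I would take the candidate $W = e_b^2$ and compute $\dot W = 2 e_b \dot e_b = -2 k_b e_b^2 + 2 e_b \dot b$. Applying Young's inequality in the form $2 e_b \dot b \le k_b e_b^2 + \dot b^2/k_b$ and then $\|\dot b\|\le b_h$ yields the differential inequality $\dot W \le -k_b W + b_h^2/k_b$. The comparison lemma then gives $W(t) \le \bigl(W(0) - b_h^2/k_b^2\bigr)e^{-k_b t} + b_h^2/k_b^2$, and taking square roots produces precisely~(\ref{erb}). I would also remark that the right-hand side is nonnegative for all $t$, since it varies monotonically between $e_b(0)^2$ and $b_h^2/k_b^2$, so the square root is well defined (and, incidentally, this is exactly the high-gain estimator bound of~\cite{stotsky2002application} specialized to $z_d = h(x)$, $v_d = a(x,u)$, $w_d = b(x,d)$, consistent with~(\ref{DOBE})).

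The step I expect to require the most care — and the main obstacle — is the regularity bookkeeping needed to legitimize the differential inequality almost everywhere and to invoke the comparison lemma: one must verify that $e_b(t)$ is (locally) absolutely continuous. This follows because, along a solution $x(\cdot)$ of~(\ref{sysdist}) with a locally Lipschitz control, $t\mapsto b = L_g h(x(t))\, d(t,x(t))$ is a product of locally Lipschitz functions of $t$ (using that $x(\cdot)$ is Lipschitz, $L_g h$ is locally Lipschitz, and $d$ is locally Lipschitz by Assumption~\ref{asnew}), hence locally Lipschitz and differentiable a.e. with $\dot b$ as in~(\ref{bh}); the auxiliary state $\varepsilon_b$ solves a linear ODE with continuous forcing and is $C^1$. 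Assumption~\ref{as2} is the standing hypothesis guaranteeing the relative-degree-one structure~(\ref{doth}) under which $\dot h$ fits the observer template~(\ref{DOB1}), so that the whole construction~(\ref{DOBb})--(\ref{DOBb1}) is well posed. Beyond this, the proof is the short cancellation computation plus the standard quadratic-Lyapunov/comparison estimate described above.
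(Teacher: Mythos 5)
Your proposal is correct and follows essentially the same route as the paper's proof: both reduce the error dynamics to $\dot e_b = -k_b e_b + \dot b$ (the paper does this cancellation inside the computation of $\dot V_d$ for $V_d=\tfrac12 e_b^2$ rather than stating the ODE first), both apply the same Young-type inequality $\|e_b\|\,b_h \le \tfrac{k_b}{2}\|e_b\|^2 + \tfrac{b_h^2}{2k_b}$ to obtain $\tfrac{d}{dt}e_b^2 \le -k_b e_b^2 + b_h^2/k_b$, and both integrate/compare to reach~(\ref{erb}). Your explicit use of the comparison lemma and the regularity remark are, if anything, a cleaner packaging of the paper's somewhat informal integration step involving the constant $e^{b_\alpha}$.
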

\begin{proof}
In order to prove the convergence of the proposed disturbance observer to its steady-state value and derive an estimation error bound, we choose a Lyapunov function $V_d : \mathbb{R} \rightarrow \mathbb{R}^+_0$ as follows by omitting the dependence on ($t, x, u, d, \hat{d} $) for simplicity:
\begin{equation}
\label{Lya}
V_d = \dfrac{1}{2}(e_b)^2 = \dfrac{1}{2}(b - \hat{b})^2.
\end{equation}
Then, time derivative of $V_d$ is given by 
\begin{equation}
\label{Lyad1}
\dot{V}_d =  (b - \hat{b}) (\dot{b} - \dot{\hat{b}}) = \dfrac{1}{2} \dfrac{d(b- \hat{b})^2}{dt}.
\end{equation}
Substituting (\ref{DOBb}) and its time derivative into (\ref{Lyad1}) yields
\begin{equation}
\label{Lyad2}
\dot{V}_d =  (b + \epsilon_b - k_b h) (\dot{b} - k_b \dot{h} + \dot{\epsilon}_b ),
\end{equation}
and, substituting (\ref{doth}) and (\ref{DOBe1}) into (\ref{Lyad2}), we have 
\begin{align}
\label{Lyad}
 \dot{V}_d & =  (b + \epsilon_b - k_b h) (\dot{b} - k_b b - k_b \epsilon_b + k_b^2 h ) \\
& =  (b - \hat{b}) (\dot{b} - k_b (b- \hat{b})) = -k_b (b - \hat{b})^2 + (b - \hat{b}) \dot{b}. \nonumber
\end{align}
Noting that,
\begin{equation}
\label{Lyade1}
-k_b (b - \hat{b})^2 + (b - \hat{b}) \dot{b} \leq -k_b (b - \hat{b})^2 + \| b - \hat{b} \|b_h,
\end{equation}
we obtain
\begin{equation}
\label{Lyade2}
\dfrac{1}{2} \dfrac{d(b- \hat{b})^2}{dt} \leq -k_b (b - \hat{b})^2 + \| b - \hat{b} \|b_h.
\end{equation}
Therefore, we need to define an upper bound for $\| b - \hat{b} \|b_h$.
Now, consider the following inequality
\begin{equation}
\label{Lyade3}
 (k_b \| b - \hat{b} \| - b_h)^2 = k_b^2 \| b - \hat{b} \|^2 -2 \| b - \hat{b} \| k_b b_h   + b_h^2 \geq 0,
\end{equation}
which results
\begin{equation}
\label{Lyade4}
\| b - \hat{b} \|b_h \leq \dfrac{k_b \| b - \hat{b} \|^2}{2} + \dfrac{b_h^2}{2 k_b}.
\end{equation}
Substituting this upper bound into (\ref{Lyade2}), we obtain
\begin{equation}
\label{Lyade5e}
2 \dot{V}_d = \dfrac{d(b- \hat{b})^2}{dt} \leq -k_b (b - \hat{b})^2 + \dfrac{b_h^2}{k_b}.
\end{equation}
Integration of (\ref{Lyade5e}) yields the following inequality
\begin{equation}
\label{Lyade7}
\| b - \hat{b} \| \leq  \sqrt{ e^{b_\alpha}  e^{-k_b t} + {b_h^2}/{k_b^2}},
\end{equation}
where ${b_\alpha} \in \mathbb{R}$. Finally, solving (\ref{Lyade7}) for $e^{b_\alpha}$ with initial conditions leads to
\begin{equation}
\label{Lyade8}
 e^{b_\alpha}  \geq (b(0) - \hat{b}(0))^2 -b_h^2/k_b^2 
\end{equation}
for $t > 0$; therefore, we can replace $e^{b_\alpha}$ with a constant using~(\ref{Lyade8}) as $e^{b_\alpha}= (b(0) - \hat{b}(0))^2 -b_h^2/k_b^2$ that leads to
\begin{equation}
\label{Lya3}
  \| b - \hat{b} \| \leq \underbrace{ \sqrt{ \big ( (b(0) - \hat{b}(0))^2 -b_h^2/k_b^2 \big ) e^{-k_b t} + {b_h^2}/{k_b^2}}}_{M_b(t, x, u, d, \hat{d})},
\end{equation}
which is the statement of the theorem.
\end{proof}
\begin{remark}
\label{Re1}
Since $( b(0) - \hat{b}(0) )^2 > (b(0) - \hat{b}(0))^2 -b_h^2/k_b^2$, Theorem~\ref{t1} provides a less conservative upper bound with a convergence guarantee for the estimation error dynamics than (\ref{DOBE}), derived in \cite{stotsky2002application}.
\end{remark}

The proposed high-gain disturbance observer is an estimation error quantified disturbance observer for the time derivative of $h(x)$.
Moreover, (\ref{Lya3}) implies that $\hat{b}(x, u, d) \rightarrow b(x, d)$ if $k_b \rightarrow \infty$, and if $(L_g h)^{-1}$ exists, one can easily compute the estimated disturbance as
\begin{equation}
\label{do}
  \hat{d} = (L_g h)^{-1} \hat{b}(x, d).
\end{equation}
Hence, if there exists a constant $b_d \in \mathbb{R}^+_0$ such that ${\|\dot{d}\|} \leq b_d$, we can derive a bound for $d_e$,
\begin{equation}
\label{dde}
    \hat{d} = d - d_e,
\end{equation}
as $\| d_e \| \leq M_d(t, x, u, d, \hat{d}) := \| (L_g h)^{-1} \|M_b(t, x, u, d, \hat{d}) $.  
Then, plugging in for $d$ in the constraints of optimization problem (\ref{rCBF}) yields
\begin{align}
\label{deCBF}
  L_f h(x) + L_g h(x) (u + \hat{d} + d_e ) \geq - \alpha (h(x)),  \\
    L_f V(x) + L_g V(x) (u + \hat{d} + d_e ) \leq -\lambda V(x) . 
\end{align}

\begin{lemma}
\label{l1}
Consider the disturbed system (\ref{sysdist}) with an estimated error quantified disturbance observer that provides $\hat{d}$ with an error bound $\| d_e \| \leq M_d(t, x, u, d, \hat{d})$.
Suppose that a safe set $\mathcal{C} \subset X$ and a 0-superlevel set of the continuously differentiable function $h(x): X \rightarrow \mathbb{R}$, and $\alpha \in \mathcal{K}_{\infty, e}$ are given for the nominal system (\ref{system}).
If a control signal $u \in U$ satisfies
\begin{equation}
\label{the1}
      L_f h(x)  +  L_g h(x) u 
     + \underbrace{ L_g h(x) \hat{d}}_{\hat{b}} -  \underbrace{ \|L_g h(x)\|M_d}_{M_b}  \geq - \alpha (h(x)) , 
\end{equation}
then the robust CBF constraint in (\ref{rCBF}) is also guaranteed.
\end{lemma}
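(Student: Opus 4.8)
\emph{Proof proposal.} The plan is to substitute the disturbance decomposition (\ref{dde}) into the robust CBF constraint of (\ref{rCBF}) and then lower-bound the single remaining unknown term in the worst case using the observer error bound.

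First I would write the true, unmeasured disturbance as $d = \hat d + d_e$ from (\ref{dde}), so that the left-hand side of the robust CBF constraint in (\ref{rCBF}) reads
\[
L_f h(x) + L_g h(x) u + L_g h(x)\hat d + L_g h(x) d_e .
\]
The first three terms are exactly the known quantity $a(x,u) + \hat b$ appearing in (\ref{the1}) (recall $\hat b = L_g h(x)\hat d$ from (\ref{DOBb}) and (\ref{do})); the only term not directly available is $L_g h(x) d_e$, which carries the realized estimation error.

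Next I would estimate that residual from below. By the Cauchy--Schwarz inequality, $L_g h(x) d_e \ge -\|L_g h(x)\|\,\|d_e\|$, and since the observer is estimation-error quantified with $\|d_e\| \le M_d(t,x,u,d,\hat d)$, we get $L_g h(x) d_e \ge -\|L_g h(x)\| M_d = -M_b$, i.e.\ exactly the quantity subtracted in (\ref{the1}). (Equivalently, $L_g h(x) d_e = b - \hat b = e_b$ by (\ref{DOBb1}), and Theorem~\ref{t1}, inequality (\ref{Lya3}), gives $\|e_b\| \le M_b$ directly.) Combining this with the hypothesis (\ref{the1}),
\[
L_f h(x) + L_g h(x)(u + \hat d + d_e) \ \ge\ L_f h(x) + L_g h(x) u + L_g h(x)\hat d - \|L_g h(x)\| M_d \ \ge\ -\alpha(h(x)),
\]
which is precisely the robust CBF constraint in (\ref{rCBF}) evaluated along the disturbance actually acting on (\ref{sysdist}). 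Since $M_d$ (equivalently $M_b$) bounds the error uniformly in $t$ and independently of which $d \in D$ is realized, the conclusion holds for the admissible disturbance at hand, matching the ``$\forall d \in D$'' requirement of (\ref{CBFkx}).

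I do not expect a genuine obstacle here: the whole argument is a one-line worst-case estimate of the unknown residual $L_g h(x) d_e$ via Cauchy--Schwarz together with the observer bound. The only point that warrants a careful sentence is the bookkeeping that identifies the tightening term $\|L_g h(x)\| M_d$ in (\ref{the1}) with the observer error bound $M_b$ of Theorem~\ref{t1} --- either through $M_d = \|(L_g h)^{-1}\| M_b$ with $\|d_e\| \le M_d$, or, more directly, through $L_g h(x) d_e = e_b$ and (\ref{Lya3}) --- so that the robustification introduced in (\ref{the1}) is exactly the estimated error bound on $\dot h$ and nothing more conservative.
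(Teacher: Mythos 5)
Your proposal is correct and follows essentially the same route as the paper's proof: decompose $d = \hat d + d_e$, lower-bound the residual $L_g h(x) d_e \ge -\|L_g h(x)\|\,\|d_e\| \ge -\|L_g h(x)\| M_d = -M_b$ by Cauchy--Schwarz and the observer error bound, and conclude that the left-hand side of the robust constraint in (\ref{rCBF}) dominates the left-hand side of (\ref{the1}). Your added bookkeeping identifying $L_g h(x) d_e$ with $e_b$ and tying $M_b$ back to (\ref{Lya3}) is a useful clarification but not a different argument.
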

\begin{proof}
Our objective is to show that $L_g h(x)d$ in (\ref{rCBF}) is an upper bound for $L_g h(x)  \hat{d} -  \|L_g h(x)\|M_d $ $\forall t \geq 0$. 
We have
\begin{align}
\label{pr1}
    L_g h(x) d & =  L_g h(x) (\hat{d} + d_e) =  L_g h(x) (\hat{d} + d - \hat{d}) \nonumber \\
   & \geq  L_g h(x) (\hat{d} )  - \| L_g h(x) \| \Big (\| d - \hat{d}  \| \Big ) \nonumber \\
   & \geq \underbrace{ L_g h(x) (\hat{d} )}_{\hat{b}}  - \underbrace{ \| L_g h(x) \| M_d(t, d, \hat{d}) }_{M_b},
\end{align}
which means that (\ref{the1}) $\implies$ (\ref{CBFkx}).
\end{proof}
\begin{remark}
\label{R1}
Lemma~\ref{l1} provides a sufficient robust safety condition via a modified CBF, and if $\hat{b} \rightarrow b$, (\ref{the1}) becomes equivalent to the robust safety constraint given in (\ref{doth}) with a sufficiently large $k_b$. However, this high gain amplifies the measurement noise that affects ${h}(x, u, d)$, $a(x, u)$ variables in (\ref{DOBb}), (\ref{DOBe1}), (\ref{DOBb1}). Therefore, we need to choose an appropriate $k_b$ parameter by considering steady-state estimation error, i.e., $\sqrt{{b_h^2}/{k_b^2}}$, and the effects of the sensor noises.
\end{remark}

\begin{remark}
The proposed method is also applicable to high relative degree CBF constraints.
To show this, let consider a higher relative degree disturbed system (\ref{sysdist}) with an ECBF $h(x)$ defined in Definition \ref{D1}, where $L_g h(x) = 0$ $~\forall x \in X$. 
The $r^{th}$-order time derivative of $h(x)$ is given by
\begin{equation}
\label{ECBF2}
 \underbrace{ h^r(x, u, d) }_{\big (\dot{h}^{r-1}(x, u, d) \big)} = \underbrace{ L^{r}_f h(x) + L_g L^{r-1}_f h(x) u}_{a_e(x, u)}  + \underbrace{ L_g L^{r-1}_f h(x) d}_{b_e(x, d)},
\end{equation}
which is in the form of the first-order dynamical system (\ref{DOB1}); therefore, we can adapt the proposed input disturbance observer scheme to estimate the unknown dynamics $b_e(x, d)$.
Again, if there exists a constant $b_E \in \mathbb{R}^+_0$ such that ${\|\dot{b}_e(x, d)\|} \leq b_E $, then a high-gain disturbance observer can be proposed to estimate $b_e(x, d)$ with an error bound $\| b_e - \hat{b}_e\| \leq M_{b_e}(t, x, u, d, \hat{d})$. 
Finally, the robust ECBF constraint is defined as
\begin{equation}
\label{ECBF3}
  L^{r}_f h(x) + L_g L^{r-1}_f h(x) u + \underbrace{ L_g L^{r-1}_f h(x) \hat{d}}_{\hat{b}_e(x, d)} - M_{b_e} \geq -K_{\alpha} \eta_b(x),
\end{equation}
where $K_{\alpha},~\eta_b(x)$ are given in (\ref{ECBF}).
\end{remark}
 
Note that proposed disturbance observer framework estimates the effect of the disturbance on the time derivative of $h(x)$. Therefore, Lemma~\ref{l1} provides a sufficient robust CBF condition even if $(L_g h)^{-1}$ is not exactly known.
However, without the estimated disturbance $\hat{d}$ and the associated error bound, the robust CLF constraint cannot be defined.
To address this issue, consider the time derivative of $V(x) : \mathbb{R}^n \rightarrow \mathbb{R}$ given by
\begin{equation}
\label{dotV}
  \dot{V}(x, u, d) = \underbrace{ L_f V(x) + L_g V(x)u}_{a_V(x, u)} + \underbrace{ L_g V(x)d }_{b_V(x, d)},
\end{equation}
where $b_V(x, d)$ needs to be estimated. 
If there exists a constant $b_L \in \mathbb{R}^+_0$ such that ${\|\dot{b}_V(x, d)\|} \leq b_L$, then we can design a disturbance observer to estimate $b_V(x, d)$ with an error bound $\| b_V - \hat{b}_V\|$ $\leq M_{b_V}(t, x, u, d, \hat{d})$, $\| d_e\| =\| d - \hat{d}\|$  $\leq M_{d_V}(t, x, u, d, \hat{d}) = \| (L_g V)^{-1} \|M_{b_V}(t, x, u, d, \hat{d}) $, 
which leads to following lemma to provide a sufficient robust CLF condition for system (\ref{sysdist}). 

\begin{lemma}
\label{l2}
Consider the disturbed nonlinear system (\ref{sysdist}) with an estimated error quantified observer that provides $\hat{d}$ with an estimation error bound $\| d_e \|$~$\leq M_{d_V}(t, x, u, d, \hat{d})$.
Suppose that a continuously differentiable  exponentially stabilizing CLF function $V : X \rightarrow \mathbb{R}^+_0$, and $\lambda \in \mathbb{R}^+$ is given for nominal system (\ref{system}).
If a control signal $u \in U$ satisfies
\begin{equation}
\label{the2}
      L_f V(x)  +  L_g V(x) u + \underbrace{ L_g V(x) \hat{d}}_{\hat{b}_V} + \underbrace{ \|L_g V(x)\|M_{d_V}}_{M_{b_V}} \leq -\lambda V(x),
\end{equation}
then the robust CLF constraint in (\ref{rCBF}) is also guaranteed.
\end{lemma}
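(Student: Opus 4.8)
The plan is to reproduce the argument in the proof of Lemma~\ref{l1}, with the sense of the inequality reversed because the CLF condition is an upper bound on $\dot V$ rather than a lower bound on $\dot h$. The target claim is that the unknown term $L_g V(x)d$ appearing in the robust CLF constraint of (\ref{rCBF}) is \emph{over}-approximated by $L_g V(x)\hat d + \|L_g V(x)\|M_{d_V}$ for all $t \ge 0$. Once that is established, any $u \in U$ satisfying (\ref{the2}) also satisfies $L_f V(x) + L_g V(x)u + L_g V(x)d \le -\lambda V(x)$, i.e., $u \in K_\text{CLF}(x,d)$ as in (\ref{CLFkx}), which is precisely the robust CLF constraint in (\ref{rCBF}).

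First I would set up the disturbance observer on $\dot V$: since $\dot V(x,u,d)$ in (\ref{dotV}) has the first-order form (\ref{DOB1}) with known part $a_V(x,u)$ and unknown part $b_V(x,d) = L_g V(x)d$, and since $L_g V(x) \ne 0$ by Assumption~\ref{as2}, Theorem~\ref{t1} applies verbatim with $h$ replaced by $V$ and $b_h$ by $b_L$; this yields $\hat b_V$ with $\|b_V - \hat b_V\| \le M_{b_V}(t,x,u,d,\hat d)$, hence $\hat d = (L_g V)^{-1}\hat b_V$ with $\|d_e\| = \|d - \hat d\| \le M_{d_V} = \|(L_g V)^{-1}\|M_{b_V}$. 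Next, decomposing $d = \hat d + d_e$ and applying the Cauchy--Schwarz inequality gives $L_g V(x)d = L_g V(x)\hat d + L_g V(x)d_e \le L_g V(x)\hat d + \|L_g V(x)\|\,\|d_e\| \le \hat b_V + M_{b_V}$, where the last step uses $\|d_e\| \le M_{d_V}$ together with $\|L_g V(x)\|\,M_{d_V} = M_{b_V}$ (the identification already made in (\ref{the2})). Substituting this into the CLF row of (\ref{rCBF}) and invoking (\ref{the2}) closes the argument.

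I do not expect a real obstacle: this is the mirror image of Lemma~\ref{l1}. The only points requiring care are (i) forming the \emph{sum} $\hat b_V + M_{b_V}$ rather than the difference, as dictated by the $\le$ sense of the CLF inequality, and (ii) checking that the observer of Theorem~\ref{t1} is well posed for $\dot V$, which reduces to $L_g V(x) \ne 0$ and to the existence of the constant $b_L$ with $\|\dot b_V\| \le b_L$ --- the latter justified exactly as $b_h$ was, from local Lipschitzness of $f$, $g$, $d$ and a locally Lipschitz $u$ on the bounded set $X$. I would also append a remark parallel to Remark~\ref{R1}: (\ref{the2}) is only \emph{sufficient}, it tightens to the exact robust CLF constraint $\dot V \le -\lambda V$ as $\hat b_V \to b_V$ (equivalently $k_b \to \infty$), and the gain must trade the steady-state error $\sqrt{b_L^2/k_b^2}$ against amplification of measurement noise in the quantities $V(x)$ and $a_V(x,u)$ feeding the observer.
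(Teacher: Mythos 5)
Your proposal is correct and follows essentially the same route as the paper's proof: decompose $d = \hat{d} + d_e$, bound $L_g V(x) d_e \leq \|L_g V(x)\|\,\|d_e\| \leq \|L_g V(x)\| M_{d_V} = M_{b_V}$ by Cauchy--Schwarz and the observer error bound, and conclude that (\ref{the2}) implies the robust CLF constraint. The additional well-posedness discussion for the observer on $\dot{V}$ and the remark on sufficiency are consistent with the surrounding text and do not change the argument.
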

\begin{proof}
Our objective is to show that $L_g V(x)d$ in (\ref{rCBF}) is a lower bound of $L_g V(x)  \hat{d} +  \|L_g V(x)\|M_d $ $~\forall t \geq 0$.
We also have
\begin{align}
\label{pr2}
    L_g V(x) d & =  L_g V(x) (\hat{d} + d_e) =  L_g V(x) (\hat{d} + d - \hat{d}) \nonumber \\
   & \leq  L_g V(x) (\hat{d} )  + \| L_g V(x) \| \Big (\| d - \hat{d}  \| \Big ) \nonumber \\
   & \leq \underbrace{ L_g V(x) (\hat{d} )}_{\hat{b}_V}  + \underbrace{\| L_g V(x) \| M_d(t, d, \hat{d})}_{M_{b_V}} ,
\end{align}
which means that (\ref{the2}) $\implies$ (\ref{CLFkx}).
\end{proof}

By Lemma \ref{l1} and Lemma \ref{l2}, the pointwise safe controller with a valid CBF for disturbed nonlinear system (\ref{sysdist}) is obtained by the following robust CLF-CBF-QP:
\begin{argmini}|s|[3]<b>
{u \in K_\text{CBF},~\delta \in \mathbb{R}}{\|u-k(x)\|^2 + p \delta^2}
{\labelOP{robCBF0}}
{u^*(x)=}
{\labelOP{robCBF}}
\addConstraint{L_f h(x)  +  L_g h(x) u + \hat{b} - M_b  \geq - \alpha (h(x)) }
\addConstraint{L_f V(x)  +  L_g V(x) u + \hat{b}_V + M_{b_V} \leq -\lambda V(x) + \delta}
\end{argmini}
Note that, for higher relative degree systems, the CBF constraint in (\ref{robCBF}) needs to be replaced with the robust ECBF constraint given in (\ref{ECBF3}). 

Finally, if $(L_g h)^{-1}$ exists, we do not need to use another disturbance observer for the robust CLF constraint since $\hat{d}$ can be obtained via (\ref{do}).
Then, the constraints in (\ref{robCBF}) are replaced as in the following robust CLF-CBF-QP:
\begin{align}
{\label{deCBFm}}
\begin{array}{lll}
{u^*(x)= \ }
\displaystyle  \argmin_{u \in K_\text{CBF},~\delta \in \mathbb{R}} \ \ \ {\|u-k(x)\|^2 + p \delta^2}  \\ [2mm]
\textrm{s.t.}  \\
 L_f h(x)  +  L_g h(x) (u + \hat{d}) -  \|L_g h(x)\|M_d  \geq - \alpha (h(x)) \\ [1mm]
L_f V(x)  +  L_g V(x) (u + \hat{d}) + \|L_g V(x)\|M_d   \leq -\lambda V(x) + \delta \
\end{array}
\end{align}
One of the most common ways to compensate for the effects of the input disturbance is to subtract the estimated disturbance from the baseline control signal. Therefore, in order to reject the effects of the disturbance, the objective function of the CLF-CBF-QP in (\ref{robCBF}) can be modified as $\|u-(k(x)-\hat{d})\|^2$.

\section{Simulation Results}

\subsection{Adaptive Cruise Control Example}
In this subsection, we apply the proposed disturbance observer based robust, safe controller design methodology to an adaptive cruise control example \cite{xu2015robustness, zhao2020adaptive}, in which our safety objective is to maintain the safe following distance while cruising at a constant speed.
The system dynamics are in the form (\ref{sysdist}):
\begin{equation}
\label{ACCdyn}
\underbrace{ \begin{bmatrix}
\dot{v_l} \\ 
\dot{v_f} \\ 
\dot{D}
\end{bmatrix} }_{\dot{x}}
=
\underbrace{ \begin{bmatrix}
a_l \\ 
{-F_r/m} \\ 
v_l - v_d
\end{bmatrix} }_{f(x)}
+ 
\underbrace{ \begin{bmatrix}
0 \\ 
1/m \\ 
0
\end{bmatrix} }_{g(x)}
u
+ 
\underbrace{ \begin{bmatrix}
0 \\ 
1/m \\ 
0
\end{bmatrix} }_{g(x)}
d,
\end{equation}
where $v_l \ [m/s]$ and $v_f \ [m/s]$ are the velocity of the lead car and the following car, respectively, and $D \ [m]$ is the distance between the lead and following cars,~ $F_r = f_0 + f_1 v_f + f_2 (v_f)^2 \ [N]$ is the aerodynamic drag, $m \ [kg]$ is the mass of the following car, $a_l$ is the acceleration of the lead car, $d$ is the external disturbance.
The safety constraint requires the following car to keep a safe distance from the lead car as $D \geq v_f \tau_d $, where $\tau_d$ is the desired time headway. 
The CBF  $h(x) =  D - v_f \tau_d$ captures this.
The closed-loop control objective requires cruise at a constant speed that is encoded into the QP via CLF $V(x) = ( v_f - v_d)^2$.
The parameters of the ACC problem simulation are given in Table 1. 
Since the ACC system is in the form of a single-input system and $(L_g h)^{-1} \neq 0$, we use the robust CLF-CBF-QP given in (\ref{deCBFm}). 
The objective of robust CLF-CBF-QP is set to be ${0.5}  u^2 /{m^2} + 0.5 p \delta^2$.
The control signal is constrained as $-0.4 mg \leq u \leq 0.4 mg$.
For the sake of completeness, we compare the proposed disturbance observer based method and ISSf-CBF-QP, where we consider $\epsilon = 2620000$, and CLF-CBF-QP.

For this ACC problem, the effects of the sinusoidal input disturbances $d(t)$ on the CBF constraint (\ref{doth}) and its time derivative (\ref{bh}) are given by $b(t, x) = \tau_d {d}/{m} $ and $\dot{b}(t, x) = {\tau_d \dot{d}}/{m} $, respectively. Substituting the parameters of the system, we obtain $b_h = 2.219$. Then, in order to design the proposed disturbance observer with a sufficiently small estimation error dynamic and short convergence time, we choose $k_b = 100$.
\begin{figure}
\centering
  \includegraphics[width=85mm]{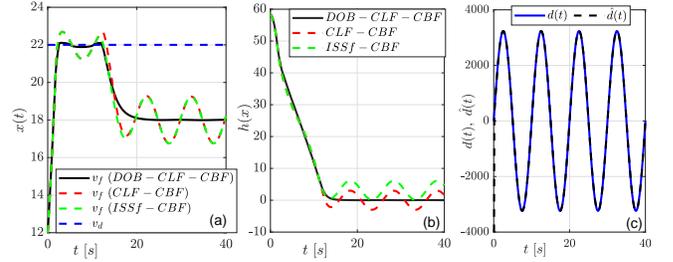}
  \caption{Numerical simulation of the ACC system. (a) Velocity of the lead car and desired velocity. (b) Control barrier function $h(x)$. (c) Estimated and actual disturbances. Simulations are performed with disturbance observer (DOB)-based robust CLF-CBF-QP (proposed method) (black), nominal CLF-CBF-QP (red), and ISSf-CBF-QP (green).} \label{Xall}
\end{figure}
\begin{table}[t]
\begin{center}
\label{table1}
\caption{Parameters in simulation for ACC example.}
\begin{tabular}{ |c|c|c|c|c} 
 \hline
 $m= 1650 \ [kg]$ & $f_0= 0.1 \ [N]$ & $p= \ 100$ \\ 
 $g = 9.81 \ [m/s^2]$ & $f_1= 5 \ [Ns/{m}]$ & $\lambda = 5$ \\ 
 $a_l= 0 \ [m/{s^2}]$ & $f_2= 0.25 \ [Ns^2/{m}]$ & $\alpha(h(x)) = h$ \\ 
 $v_d= 22 \ [m/{s}]$ & $x(0) = [18 \ 12 \ 80]^T$ & $\tau_d= 1.8 \ [s]$ \\ 
 $b_h = 2.22$ & $k_b = 100$ & $d = 0.2 g sin(20 \pi t)$ \\
 \hline
\end{tabular}
\end{center}
\end{table}

Fig.~\ref{Xall} shows the evaluation of $v_f$, $h(x)$, and disturbance estimation accomplishment of the disturbance observer, respectively.
As can be seen from Fig.~\ref{Xall}-(b), the applied disturbance input causes safety violation without the disturbance estimator, i.e., CLF-CBF-QP violates the safety requirement. 
Although the ISSf-CBF provides a safety guarantee, it causes performance deterioration due to the conservative structure of the method, as observed from Fig.~\ref{Xall}-(a) and Fig.~\ref{Xall}-(b).
The safety and control performance of the system is maintained using the proposed disturbance observer based robust safety control algorithm, as shown in Fig.~\ref{Xall}-(a) and Fig.~\ref{Xall}-(b). 
Furthermore, it is observed from  Fig.~\ref{Xall}-(c) that the proposed disturbance observer can effectively estimate actual disturbance. 

\subsection{Segway Platform Example}
The proposed method can also be applied to uncertain systems with high relative-degree safety constraints.
This section demonstrates the proposed algorithm on a Segway platform model on an inclined surface, which causes unmatched disturbance and uncertainties, with a CBF with relative degree two.

We consider a planar Segway model given by Fig.~\ref{Segway}-(a) with the position and pitch angle states $[p \ \theta]^T$ on an unknown surface inclined by $\phi$.
For simplicity, let assume that $\dot{\phi} = 0$, $\ddot{\phi} = 0$.
We have the following dynamics:
\begin{align}
\label{Segwaydyn}
\begin{bmatrix}
m_0 & m L cos(\psi) \\ 
m L cos(\psi) & J_0
\end{bmatrix} 
\begin{bmatrix}
\ddot{p} \\ 
\ddot{\theta} 
\end{bmatrix} 
+ 
\begin{bmatrix}
0 \\ 
- m g L sin(\psi)
\end{bmatrix} \nonumber
\\
+ 
\begin{bmatrix}
b_t/R & -b_t - m L \dot{\theta} sin(\psi)  \\ 
-b_t & b_t R
\end{bmatrix}
\begin{bmatrix}
\dot{p} \\ 
\dot{\theta} 
\end{bmatrix} 
=
\begin{bmatrix}
K_m/R \\ 
-K_m
\end{bmatrix}
u,
\end{align}
where $\psi = \theta + \phi$.
For more details on the descriptions and values of the parameters, see \cite{molnar2021safety}. 
Note that the terms including inclination angle $\phi$ can be viewed as the uncertainty and disturbance.
Choosing the state vector $x = [{p}~\dot{p}~{\theta}~\dot{\theta}]^T$, we can get the dynamics in the form (\ref{sysdist}) with uncertainties as
\begin{equation}
\label{sysunc}
    \dot{x} = f(x) + g(x) u + \underbrace{\Delta f(x, \phi)  + \Delta g(x, \phi)u}_{d(x, u, \phi)},
\end{equation}
where, the effect of inclination uncertainty is lumped into a single unmatched disturbance vector $d(x, u, \phi)$.
The unmodelled dynamics of the system, which is described in \cite{gurriet2018towards}, are also included in $\Delta f(x, \phi)$, $\Delta g(x, \phi)$.
We choose a control barrier function~{$h(x) = \pi/10 - \theta^2$}, which has relative degree two, to encode constraints on the pitch angle.
A Linear Quadratic Regulator (LQR) nominal controller is designed using the linearized model of the system to track the desired path.
To estimate the effect of $d(x, u, \phi)$ on the time derivative of CBF $\dot{h}(x, u, d)$, the proposed disturbance estimation framework can be adapted as
\begin{equation}
\label{dothseg}
  \dot{h}(x, u, d) = \underbrace{ L_f h(x) + L_g h(x)u}_{a(x, u)} + \underbrace{ \dfrac{\partial h}{\partial x}d(x, u, \phi) }_{b_d(x, u, \phi)}.
\end{equation}
We replace $\frac{\partial h}{\partial x}g(x)d$ between equations (\ref{CBF}) and (\ref{deCBFm}) with $\frac{\partial h}{\partial x}d$ in the case of unmatched disturbance. 

The effect of the uncertainty on the time derivative of CBF $b_d(x, u, \phi)$ should be locally Lipschitz to find an upper bound on the time derivative of $b_d(x, u, \phi)$, which is required for the proposed disturbance observer framework.
For the Segway example, by locally Lipschitz properties of $f(x),~g(x),~u$, and continuously differentiability of $h(x)$, we can show that $b_d(x, u, \phi)$ is locally Lipschitz.
Therefore, we can modify the robust CBF constraint in (\ref{robCBF}) as
\begin{equation}
\label{dothmod}
L_f h(x)  +  L_g h(x) u + \hat{b}_d(x, u, \phi) - M_{b_d}  \geq - \alpha (h(x)),
\end{equation}
where $M_{b_d}$ is the estimation error bound of $\hat{b}_d(x, u, \phi)$.
We use only the robust CBF constraint for this example. 
\begin{figure}
  \begin{subfigure}{2.9cm}
    \centering\includegraphics[width=2.9cm]{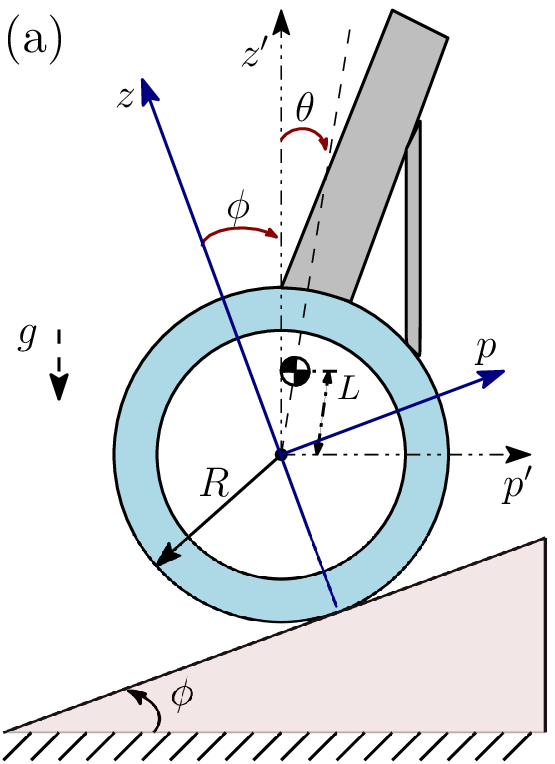}
  \end{subfigure} 
   \begin{subfigure}{5.65cm}
    \centering\includegraphics[width=5.6cm]{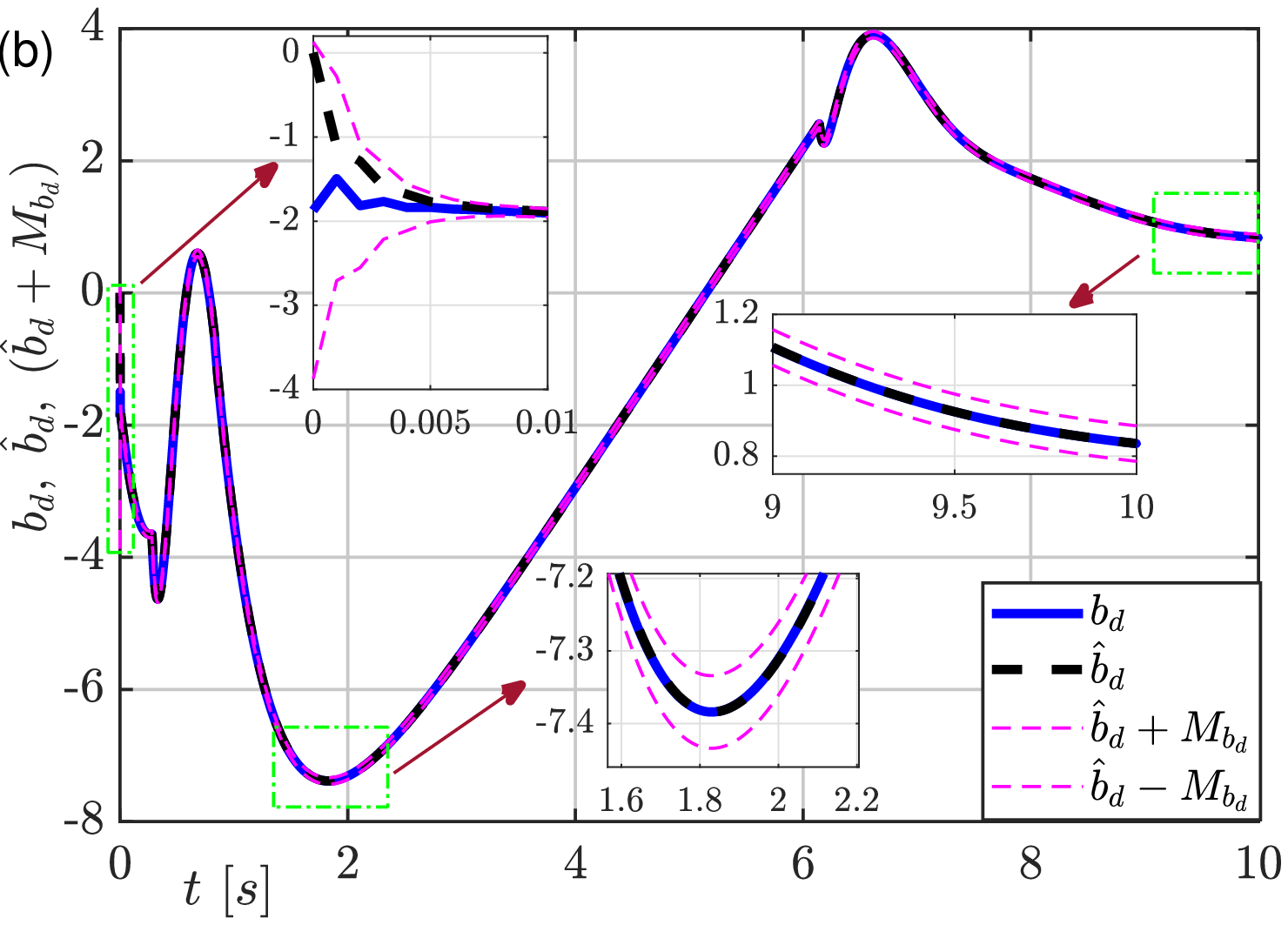}
  \end{subfigure} \\ [0.2cm]
  \begin{subfigure}{8.6cm}
    \centering\includegraphics[width=8.5cm]{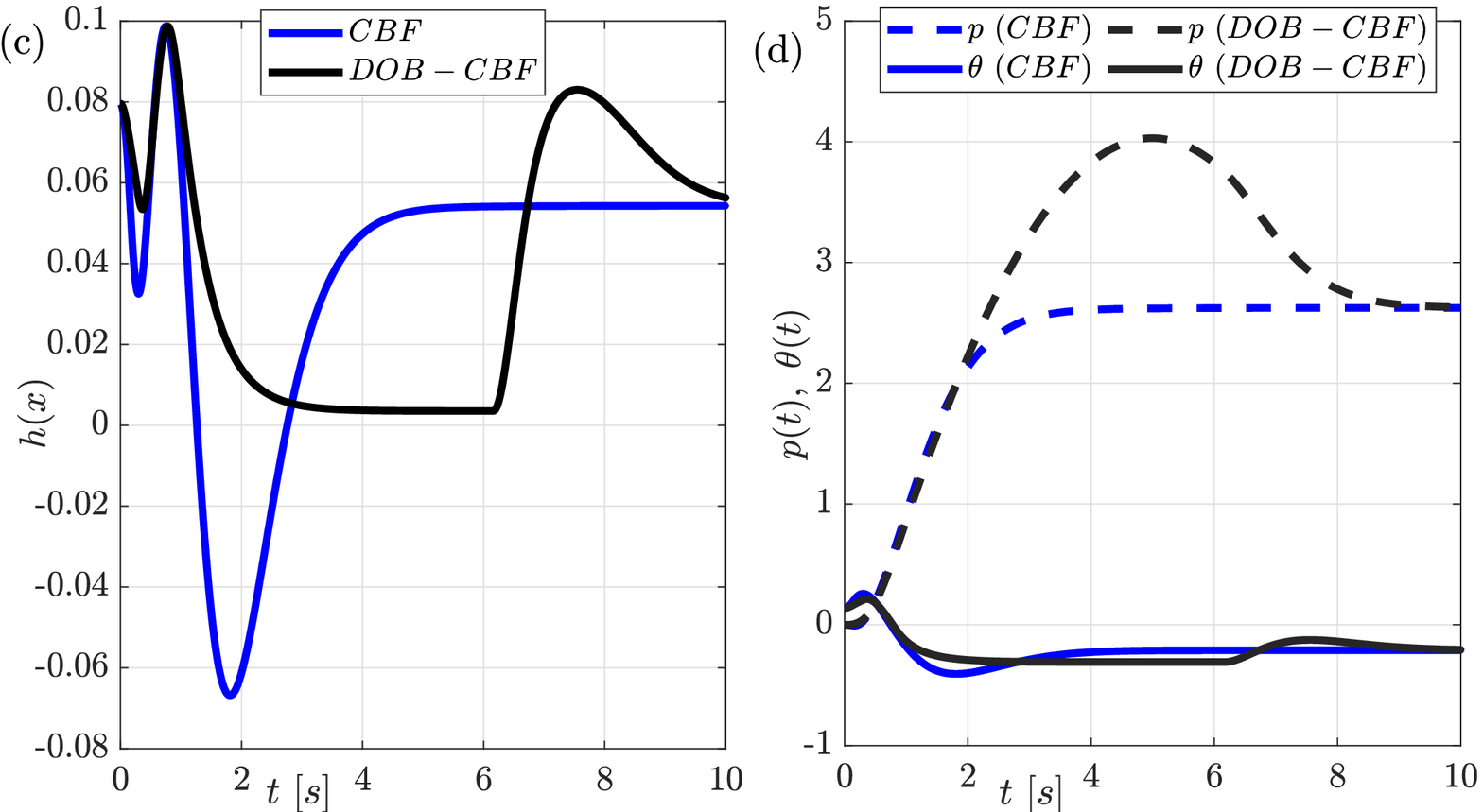}
  \end{subfigure}
  \caption{Segway example. (a) Planar Segway model on an inclined surface. (b) Estimated and actual effects of the disturbance and uncertainty on the time derivative of the control barrier function. (c) Control barrier function $h(x) \ [rad^2]$. (d) Trajectories of $p \ [m]$ and $\theta \ [rad]$. Simulations are performed with disturbance observer (DOB)-based robust CBF-QP (black) and nominal CBF-QP (blue).}
  \label{Segway}
\end{figure}
Fig.~\ref{Segway}-(c) and Fig.~\ref{Segway}-(d) show numerical simulation results where the Segway moves from $[0~ 0 ~0.138~ 0]^T$ to $[1~ 0 ~0.138~ 0]^T$ on a surface inclined by $\phi = 20^{\circ}$ in its state-space.
The planar Segway platform stays within the safe set with a disturbance observer-based approach while travelling on an inclined surface.
On the other hand, without disturbance observer, the Segway shows unsafe behaviour.
Fig.~\ref{Segway}-(b) shows that the proposed disturbance estimation approach appropriately estimate the actual effects of the disturbance and uncertainty on the time derivative of the CBF with the defined error bound.

\subsection{Conclusions and Future Work}
In this paper, we present a disturbance observer-based robust safe controller synthesis method in the presence of disturbance or uncertainty.
We first introduce a high-gain observer method to estimate the unmodelled dynamics of the CBF using only the safety constraint. 
Then, the estimated disturbance and associated error bound are utilized to construct a new robust CLF-CBF-QP.
We show the effectiveness of the method on the numerical simulations of an adaptive cruise control system and Segway with an external disturbance.
Our future work includes an extension of the proposed method to robust time-varying CBF approaches to consider the Signal Temporal Logic specifications.

\bibliographystyle{IEEEtran}
\bibliography{References.bib}

\end{document}